\newcommand{\wver}{q_{a\downarrow}}
\newcommand{\whor}{q_{a\rightarrow}}
\newcommand{\bhor}{q_{b\rightarrow}}
\newcommand{\bver}{q_{b\downarrow}}
\newcommand{\R}{\Rbb}
\newcommand{\Rbb}{\mathbb{R}}
\newcommand{\Fbb}{\mathbb{F}}
\newcommand{\Fcal}{{\mathcal{F}}}
\newcommand{\vecs}[1]{\ensuremath{\mathbf{\boldsymbol{#1}}}}
\newcommand{\mat}[1]{\ensuremath{\mathbf{#1}}}
\newcommand{\ten}[1]{\mat{\ensuremath{\boldsymbol{\mathcal{#1}}}}}
\newcommand{\A}{\mat{A}}
\newcommand{\B}{\mat{B}}
\newcommand{\M}{\mat{M}}
\renewcommand{\P}{\mat{P}}
\DeclareMathOperator*{\Tr}{Tr} 
\newcommand{\nstates}{n}
\newcommand{\szerosymbol}{\alpha}
\newcommand{\szero}{\vecs{\szerosymbol}}
\newcommand{\sinfsymbol}{\omega}
\newcommand{\sinf}{\vecs{\sinfsymbol}}
\DeclareDocumentCommand{\wa}{  O{A} O{\Fbb^\nstates} O{\szero} O{\sinf} }%
{(#2,#3,\{\mat{#1}^\sigma\}_{\sigma\in\Sigma},#4)}
\DeclareDocumentCommand{\waR}{  O{A} O{\Rbb^\nstates} O{\szero} O{\sinf} }%
{(#2,#3,\{\mat{#1}^\sigma\}_{\sigma\in\Sigma},#4)}
\newcommand{\tzerosymbol}{\alpha}
\newcommand{\tzero}{\vecs{\tzerosymbol}}
\newcommand{\tinfsymbol}{\omega}
\newcommand{\tinf}{\vecs{\tinfsymbol}}
\DeclareDocumentCommand{\wta}{ O{T} O{\Rbb^\nstates} O{\tzero} O{\tinf} O{\Fcal}}%
{(#2,#3,\{\ten{#1}^g\}_{g\in #5_{\geq 1}},\{#4^\sigma\}_{\sigma\in #5_0})}
\DeclareDocumentCommand{\trees}{g}{\IfNoValueTF{#1}{\mathfrak{T}}{\mathfrak{T}_{#1}}}
\DeclareDocumentCommand{\contexts}{g}{\IfNoValueTF{#1}{\mathfrak{C}}{\mathfrak{C}_{#1}}}
\newcommand{\gwmprod}{\diamond}
\DeclareDocumentCommand{\gwm}{  O{M} O{\Fbb^\nstates}}{(#2, \{\ten{#1}^x\}_{x\in\Sigma})}
\DeclareDocumentCommand{\gwmcirc}{  O{M} O{\Rbb^\nstates}}{(#2, \{\mat{#1}^\sigma\}_{\sigma\in\Sigma})}
\DeclareDocumentCommand{\dgwm}{ O{M} O{\Fbb^\nstates}}{(#2, \{\ten{#1}^x\}_{x\in\Sigma},\gwmprod)}
\title[Learning GWMs on pictures]{Learning Graph Weighted Models on Pictures}
  \author{\Name{Philip Amortila} \Email{philip.amortila@mail.mcgill.ca} \\
  \Name{Guillaume Rabusseau} \Email{guillaume.rabusseau@mail.mcgill.ca}\\
   \addr Reasoning and Learning Lab - School of Computer Science - McGill University}
\begin{document}

\newpage

\maketitle

\begin{abstract}

Graph Weighted Models (GWMs) have recently been proposed
as a natural generalization of weighted automata over strings and trees
 to arbitrary families of labeled graphs (and hypergraphs).
A GWM generically associates a labeled graph with a tensor
network and computes a value by successive contractions directed by
its edges. In this paper, we consider the problem of learning GWMs defined over
the graph family of pictures~(or 2-dimensional words). As a proof of concept, we
consider regression and classification tasks over the simple \emph{Bars \& Stripes} and \emph{Shifting Bits} picture languages and provide an
experimental study investigating whether these languages can be learned in the form of a GWM from 
positive and negative examples using gradient-based methods. Our results suggest that this
is indeed possible and that investigating the use of gradient-based methods to learn picture series
and functions computed by GWMs over other families of graphs could be a fruitful direction.

\end{abstract}
\begin{keywords}
Graph Weighted Models, Picture Series, Learning Weighted Automata, Structured Data
\end{keywords}

\section{Introduction}
\label{sec:intro}

The heart of automata theory is the modeling and study of functions defined over syntactical structures such as strings, trees, or graphs. A classic example are Weighted Finite Automata~(WFA)~\citep{schutzenberger1961definition,berstel1988rational}, which are finite state machines computing real-valued functions over strings. WFA 
encompass a wide class of useful tools for predictions such as hidden Markov models, predictive state representations, and probabilistic automata and are thus particularly relevant to the machine learning community. 
Of great interest, specifically for machine learning, are the so-called spectral learning algorithms~\citep{bailly2009grammatical,hsu2009spectral,balle2014spectral}, which are efficient and consistent algorithms for learning WFAs. 

Various extensions of WFA have been proposed such as Weighted Tree Automata~(WTA) \citep{Berstel82} and Weighted Picture Automata (WPA)~\citep{Bozapalidis_Grammatikopoulou_2005}, which model functions defined over trees and pictures respectively. 
Graph Weighted Models (GWM), introduced in \citep{bailly2015recognizable,JCSS2017}, are a natural generalization to the  case of labelled graph inputs, and include the above models as special cases. 
Roughly speaking, a GWM is defined by associating a tensor with every character of a ranked alphabet and computes a real-valued function over graphs labelled by symbols of the alphabet.
The value computed by a GWM is obtained by constructing a \textit{tensor network} out of the graph input and computing a value via successive tensor contractions.

To date, GWMs have been studied mostly from the perspective of formal languages~\citep{JCSS2017,FOSSACS2018}, and designing learning algorithms for these automata remains to be done. In particular, extending the spectral algorithms for WFA to GWMs remains an open (and challenging) problem which we are currently investigating. 
In this work, we examine an alternative approach for learning GWMs from data by experimentally exploring how traditional gradient-based algorithms perform on this task. While this approach can be applied to GWMs defined over arbitrary families of graphs, we focus here on regression and classification problems for simple picture languages. Firstly, as an instance of regression tasks, we examine the \emph{Bars \& Stripes} language made of pictures containing only horizontal or vertical stripes. After showing that this language can be computed by a WPA~(and equivalently by a GWM defined over pictures), we show empirically that minimizing the mean squared error over a large enough sample of input/output examples allows one to recover a GWM approximating the function of interest which generalizes to unseen pictures of different sizes. Secondly, akin to logistic regression, we show how to handle classification tasks by composing the output of the model with a sigmoid activation function, and empirically demonstrate that minimizing the cross-entropy error can lead to successful classification of the \emph{Shifting Bits} language, which consists of pictures where each row is a horizontal shift of the previous one.

\section{Preliminaries}
\label{sec:prelim}

Firstly, we introduce the necessary notions. In this paper we are working with picture data, thus we begin by discussing Weighted Picture Automata.

\subsection{Weighted Picture Automaton}

A \emph{picture}~(also called an \textit{image} or a \textit{2d-word})  $p\in\mathcal{P}$ over a finite alphabet $\Sigma$ is defined as a non-empty rectangular 
array of elements of $\Sigma$, formally $\mathcal{P} = \cup_{m,n\geq 1} \Sigma^{m\times n}$. 
Given $p \in \Sigma^{m \times n}$, we write $p_{i,j}$ for the component of $p$ at position $(i,j)$. A \emph{picture language} is a set of pictures, 
while a \emph{picture series} is a function from $\mathcal{P}$ to a commutative semi-ring. Regular 
picture languages can equivalently be described in terms of automata, sets of tiles, rational operations, or monadic 
second order logic~\citep{Giammarresi_Restivo_1996,Giammarresi_Restivo_Seibert_Thomas_1996,INOUE_NAKAMURA_1977,Latteux_Simplot_1997}.
The extension of regular picture languages to the quantitative setting led to the definition of \emph{recognizable picture series} 
whose theoretical study has been of recent interest~%
\citep{Bozapalidis_Grammatikopoulou_2005,Maurer_2005,fichtner2011weighted,Babari_Droste_2015}. 
Recognizable picture series were first introduced in \citep{Bozapalidis_Grammatikopoulou_2005} 
by means of Weighted Picture Automata.

\begin{definition}[\cite{Bozapalidis_Grammatikopoulou_2005}]\label{def:wpa}
A Weighted Picture Automaton (WPA) over an alphabet $\Sigma$ is a tuple $A=\left(Q,R,F_w,F_n,F_e,F_s,\delta\right)$, where $Q$ is a finite set of states, $R \subseteq \Sigma \times Q^4$ is a finite set of rules, $F_w,F_n,F_e,F_s \subseteq Q$ are four poles of acceptance, and $\delta : R \rightarrow \mathbb{R}$ is the weighted transition function%
\footnote{WPAs are originally defined over arbitrary commutative semi-rings but we will only consider the field $\mathbb{R}$.}.
Given a rule $r=\left(\sigma,q_w,q_n,q_e,q_s\right) \in R$, we denote its label by $\ell(r) = \sigma$, and its poles by $west(r) = q_w,\  north(r)=q_n, \ east(r)=q_e,$  and $south(r)=q_s$. Given an image $p \in \Sigma^{m \times n}$, a run $c \in R^{m \times n}$ is an assignment of rules such that the following compatibility properties hold:
\begin{align*}
 south(c_{i,j}) &=north(c_{i+1,j}) \ \forall i \in [m-1], \forall j \in [n] \\  
 east(c_{i,j}) &= west(c_{i,j+1}) \ \forall i \in [m], \forall j \in [n-1]
\end{align*}
and $\ell(c_{i,j}) = p_{i,j} \ \forall i \in [m], j \in [n]$, where $[n] \coloneqq \{1,...,n\}$. A run is \textit{accepted} if the outer poles are in their respective poles of acceptance, i.e. 
\begin{equation*}
west(c_{i,1}) \in F_w, \ south(c_{m,j}) \in F_s,\  east(c_{i,n}) \in F_e, \ north(c_{1,j}) \in F_n \ \forall i \in [m], \forall j \in [n] \ 
\end{equation*}
We denote by $R(p)$ the set of accepted runs on $p$. The weight function is extended to runs via $\delta(c) = \Pi_{i,j} \delta(c_{i,j})$. The function computed by A is the sum of the weights over all accepting runs: $f_A(p) = \sum_{c \in R(p)} \delta(c)$, with the convention that $f_A(p) = 0$ if there are no accepting runs.
\end{definition}

\subsection{Tensor Networks and Graph Weighted Models}
We first briefly introduce \emph{tensors}
and \emph{tensor networks}. For our purposes, a tensor over $\mathbb{R}$ of order $p$ can be seen as a $p$-dimensional array of scalars: $\left(\mathcal{T}_{i_1,i_2,..,,i_p} \in \mathbb{R} : i_n \in [d], n \in [p]\right)$\footnote{In general we could have $i_n \in [d_n]$ for any list of integers $d_1,d_2,...,d_p$, however here we only consider the case $d_n=d \ \forall n$ -- these are called \textit{hypercubic} tensors.}. Tensor networks are a useful tool for visualizing operations on tensors and will simplify our exposition of GWMs. 
A tensor network is an undirected graph where the nodes correspond to tensors and the outgoing edges correspond to the different dimensions of the tensor. For example, a vector is a node of degree 1 and a matrix is node of degree 2. We allow nodes to have free edges that do not connect to other nodes. Each edge is numbered by an index corresponding to a dimension of the tensor, such a numbered edge is called a \emph{port}. Connecting two ports in a tensor network corresponds to a summation of the two tensors along the connected indices. A few examples of tensor networks and their associated computations are shown in Figure~\ref{fig:tn}. 

\begin{figure}[h!]
	\centering
	\vspace{\baselineskip}
	\hspace*{-1.5cm}\begin{tikzpicture}
		\input{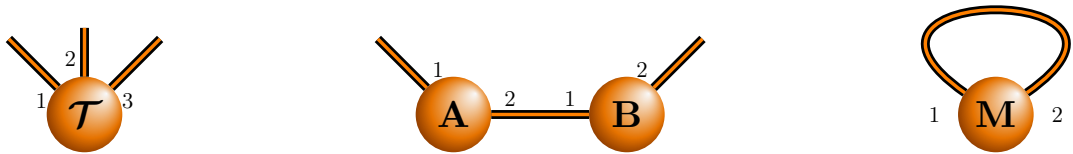}
		\scalebox{1.3}{
		\node[tensor](T){$\ten{T}$};
		\node[draw=none, above left= 0.5cm and 0.5cm of T](T-1) {};
		\node[draw=none, above=0.5cm of T](T-2) {};
		\node[draw=none, above right=0.5cm and 0.5cm of T](T-3) {};
		\simpleedgeports{T}{1}{below left}{T-1}{}{}
		\simpleedgeports{T}{2}{above left}{T-2}{}{}
		\simpleedgeports{T}{3}{below right}{T-3}{}{}

		\node[tensor,right=3cm of T](A){$\mat{A}$};
		\node[tensor, right=of A](B){$\mat{B}$};
		\node[draw=none, above left=0.5cm and 0.5cm of A](A-1) {};
		\node[draw=none, above right=0.5cm and 0.5cm of B](B-2) {};
		\node[tensor,right=3cm of B](M){$\mat{M}$};
		\draw[edge] (M) to [in=145,out=35,looseness=8] (M) node[port,left=0.1cm of M]{1} node[port, right=0.1cm of M]{2};
		\simpleedgeports{A-1}{}{}{A}{1}{above right}
		\simpleedgeports{A}{2}{above right}{B}{1}{above left}
		\simpleedgeports{B-2}{}{}{B}{2}{above left}
		}
	\end{tikzpicture}
	\caption{(left) The arity of a node is the order of the tensor, here the node denotes a 3rd-order tensor $\ten{T} \in \R^{d\times d\times d}$.
	(middle) The connection of two ports corresponds to a contraction over the corresponding indices, here connecting the second port of the matrix $\A$ with the first port of the
	matrix $\B$ represents  the classical matrix product: $(\mat{AB})_{i_1,i_2} = \sum_k \mat{A}_{i_1,k}\mat{B}_{k,i_2}$. (right) Similarly, connecting the two ports of a
	matrix represents the trace operation: $\Tr(\M)  = \sum_k \M_{kk}$.}
	\label{fig:tn}
\end{figure}
\noindent Formally, one way to compute the tensor represented by a tensor network by
first taking the tensor product of the tensors associated with all the nodes in the graph,
and then performing the contractions associated to the edges, i.e. summing over the pairs of indices corresponding to connected ports. For example, for the tensor network
in Figure~\ref{fig:tn}~(middle), the tensor product of $\A$ and $\B$ is given by the fourth order tensor $(\A\otimes \B)_{i_1i_2i_3i_4}=\A_{i_1i_2}\B_{i_3i_4}$, and one then performs the contraction
between the second port of $\A$~($i_2$) and the first port of $\B$~($i_3$) to obtain the second order tensor $\sum_{j}\A_{i_1j}\B_{ji_4}$.
Note first that in a graph with more than one edge the order of contractions does not matter; second, this is a naive way of performing the computation:  one does not need 
to construct the whole tensor product before performing the contractions, however finding the optimal contraction sequence is an NP-hard problem~\citep{pfeifer2014faster}.

Graph Weighted Models have been introduced as computational models over labelled graphs in \citep{bailly2015recognizable,JCSS2017}. A Graph Weighted Model associates a labelled graph with a tensor network computing the corresponding value. Here we present only the specifics necessary for computations of GWMs over pictures. 

\begin{definition}[Graph Weighted Models]\label{def:gwm} 
A GWM $M$ (over $\mathbb{R}$) \textit{on pictures} is a tuple $M=\left(d, \{\mathcal{T}^\sigma\}_{\sigma \in \Sigma}, \alpha^w,\alpha^n,\alpha^e,\alpha^s\right)$, where $d$ is the dimension (or number of states), $\mathcal{T}^\sigma \in \mathbb{R}^{d\times d\times d\times d}$ is a tensor of order 4  for each $\sigma \in \Sigma$, and $\alpha^w,\alpha^n,\alpha^e,\alpha^s \in \mathbb{R}^d$ are the border vectors. Given a picture $p$, the function computed by $M$ is the tensor network obtained from $p$ by replacing every character with its associated tensor and adding the border vectors.
\end{definition}

As an example, the value computed by a GWM $M$ on the picture $p={a\ b \atop b\ a}$ is represented as a tensor network in Figure~\ref{fig:TN.pic}.
\begin{figure}[h!]
	\centering
	\begin{tikzpicture}
	\input{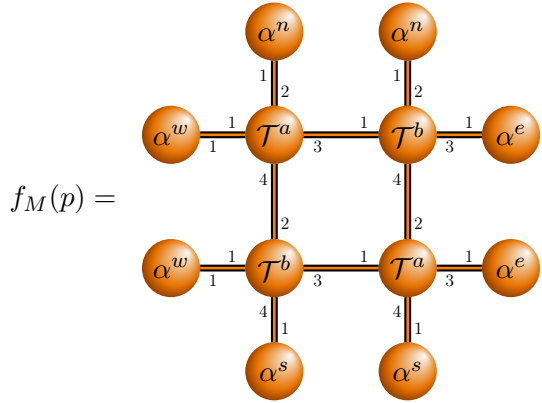}
\scalebox{1}{  

  \node[tensor] (v11)  {$\mathcal{T}^a$};
  \node[tensor, right = 1cm of v11] (v12)  {$\mathcal{T}^b$};
  \node[tensor, below = 1cm of v11] (v21)  {$\mathcal{T}^b$};
  \node[tensor, right = 1cm of v21] (v22)  {$\mathcal{T}^a$};
  
 \node[tensor, left = 0.6cm of v11] (v10) {$\alpha^w$}; 
 \node[tensor, left = 0.6cm of v21] (v20) {$\alpha^w$}; 
  \node[draw=none, above left =0.45cm of v20] (txt) {$f_M(p)=$};
 \node[tensor, above = 0.6cm of v11] (v01) {$\alpha^n$}; 
 \node[tensor, above = 0.6cm of v12] (v02) {$\alpha^n$}; 
 \node[tensor, right = 0.6cm of v12] (v14) {$\alpha^e$}; 
 \node[tensor, right = 0.6cm of v22] (v24) {$\alpha^e$}; 
 \node[tensor, below = 0.6cm of v21] (v31) {$\alpha^s$}; 
 \node[tensor, below = 0.6cm of v22] (v32) {$\alpha^s$}; 
 
 \simpleedgeports{v10}{1}{below right}{v11}{1}{above left}
 \simpleedgeports{v20}{1}{below right}{v21}{1}{above left}
 \simpleedgeports{v01}{1}{below left}{v11}{2}{above right}
 \simpleedgeports{v02}{1}{below left}{v12}{2}{above right}
 \simpleedgeports{v12}{3}{below right}{v14}{1}{above left}
 \simpleedgeports{v22}{3}{below right}{v24}{1}{above left}
 \simpleedgeports{v21}{4}{below left}{v31}{1}{above right}
 \simpleedgeports{v22}{4}{below left}{v32}{1}{above right}
  
 \simpleedgeports{v11}{3}{below right}{v12}{1}{above left}
 \simpleedgeports{v21}{3}{below right}{v22}{1}{above left}

 \simpleedgeports{v11}{4}{below left}{v21}{2}{above right}
 \simpleedgeports{v12}{4}{below left}{v22}{2}{above right}
  }
	\end{tikzpicture}	
\caption{The tensor network associated with the picture $p={a\ b \atop b\ a}$. Note that the tensor network has no free edges, thus indeed computes to a real number.}\label{fig:TN.pic}
\end{figure}

\noindent The computation can be written explicitly as:
\begin{equation}\label{eq:gwm}
f_M(p) = \sum_{i_1,i_2,..,i_{12}}\alpha^w_{i_1}\alpha^n_{i_2}\mathcal{T}^a_{i_1,i_2,i_3,i_4}\mathcal{T}^b_{i_3,i_5,i_6,i_7}\alpha^n_{i_5}\alpha^e_{i_6}\mathcal{T}^b_{i_8,i_4,i_9,i_{10}}\alpha^w_{i_8}\alpha^s_{i_{10}}\mathcal{T}^a_{i_9,i_7,i_{11},i_{12}}\alpha^e_{i_{11}}\alpha^s_{i_{12}}
\end{equation}

\noindent The following useful result states that any WPA can be realized by a GWM on pictures.

\begin{proposition}{\cite[Proposition 4]{rabusseau2016thesis}}\label{prop:wpa}
Any function that is computable by a WPA with $d$ states is computable by a $d$-dimensional GWM on pictures.
\end{proposition}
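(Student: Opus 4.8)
The plan is to give an explicit construction of a $d$-dimensional GWM from the given WPA and then verify that the two models compute the same function by exhibiting a weight-preserving bijection between accepting runs of the WPA and the nonzero terms of the GWM's tensor contraction. First I would identify the state set $Q$ with $[d]$ and encode the rule weights directly into the transition tensors, using the port convention (west, north, east, south) $=(1,2,3,4)$ read off from Figure~\ref{fig:TN.pic} and Equation~\eqref{eq:gwm}: for each $\sigma\in\Sigma$ set
\[
\mathcal{T}^\sigma_{q_w,q_n,q_e,q_s}=
\begin{cases}
\delta(r) & \text{if } r=(\sigma,q_w,q_n,q_e,q_s)\in R,\\
0 & \text{otherwise.}
\end{cases}
\]
This is well defined since $R\subseteq\Sigma\times Q^4$ contains at most one rule per label-and-pole tuple. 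The border vectors are taken to be the indicator vectors of the poles of acceptance:
\[
\alpha^w_q=\mathbbm{1}[q\in F_w],\quad
\alpha^n_q=\mathbbm{1}[q\in F_n],\quad
\alpha^e_q=\mathbbm{1}[q\in F_e],\quad
\alpha^s_q=\mathbbm{1}[q\in F_s].
\]

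Next, for a fixed picture $p\in\Sigma^{m\times n}$, I would observe that an assignment of an index in $[d]$ to every edge of the tensor network is precisely the same data as an assignment of a state to every shared pole of the $m\times n$ grid, and that each cell $(i,j)$ together with the four indices on its incident edges selects the rule $(p_{i,j},q_w,q_n,q_e,q_s)$. By construction the cell's tensor factor $\mathcal{T}^{p_{i,j}}_{q_w,q_n,q_e,q_s}$ is nonzero exactly when this tuple is a genuine rule of $R$, in which case it equals $\delta$ of that rule. The crucial point is that the internal edges of the network identify the east index of each cell with the west index of its right neighbour and the south index with the north index of its lower neighbour, which is literally the pair of compatibility conditions $east(c_{i,j})=west(c_{i,j+1})$ and $south(c_{i,j})=north(c_{i+1,j})$ of Definition~\ref{def:wpa}. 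Hence an index assignment makes every cell factor nonzero if and only if it arises from a run $c\in R^{m\times n}$. Likewise, each exposed border edge contracts against a border vector, and the choice of indicator vectors makes the corresponding factor $1$ when the boundary pole lies in its acceptance set and $0$ otherwise, enforcing exactly the WPA acceptance condition.

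Finally I would assemble these observations: $f_M(p)$ is the sum over all edge-index assignments of the product of the cell and border factors, and the nonzero terms are in bijection with the accepting runs $c\in R(p)$, each such term equalling $\prod_{i,j}\delta(c_{i,j})=\delta(c)$ because all surviving border factors are $1$. Therefore $f_M(p)=\sum_{c\in R(p)}\delta(c)=f_A(p)$, and since $p$ was arbitrary the two functions coincide. I do not expect a genuine mathematical obstacle here: the argument is a direct unwinding of the two definitions, and the only real care required is the bookkeeping for a general $m\times n$ grid rather than the $2\times 2$ example of Figure~\ref{fig:TN.pic}, together with keeping the port/pole correspondence consistent throughout the boundary and interior edges.
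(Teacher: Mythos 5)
Your construction is correct and is essentially the standard one used in the cited reference: encode each rule's weight as the corresponding entry of $\mathcal{T}^\sigma$ indexed by its four poles, take the border vectors to be indicators of the acceptance sets, and observe that edge-index assignments with all factors nonzero correspond exactly to accepting runs, with matching weights. The only (harmless) imprecision is the claim that a cell factor is nonzero iff the tuple is a rule --- a rule with $\delta(r)=0$ gives a zero factor --- but such terms contribute zero to both sums, so the identity $f_M(p)=f_A(p)$ is unaffected.
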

We note that the converse of the previous proposition holds in the sense that WPA can compute
any function that can be computed by a GWM defined on the family of graph representations of 
pictures~(see~\citep[Proposition~3.9]{JCSS2017} for more details).

\section{Learning of GWMs on Pictures}
\label{sec:learning}

In this section we present the languages we will attempt to learn in our experiments. Afterwards, we formalize the learning problem, and present the gradient--based approach we propose to tackle it. 

\subsection{Bars \& Stripes and Shifting Bits}\label{sec:bars}

We will apply our methods to regression and classification tasks on two picture languages over 2-letter alphabets (denoted $a$ for white and $b$ for black), demonstrating the ability of our approach to learn simple picture languages.

\subsubsection{Bars \& Stripes}

In the \textit{Bars \& Stripes} (BS)  language~\citep{mackay2003information}, each image is composed of either horizontal stripes or vertical bars, but not both~(unless it is fully white or fully black). Formally, we have $BS = \{p \in \mathcal{P}\colon (p_{i,j} = p_{i+1,j}\ \forall  i,j ) \text{ or } (p_{i,j} = p_{i,j+1}\ \forall  i,j )\} \subseteq \mathcal{P}$. A few sample images are shown in Figure~\ref{fig:BS}.
\begin{figure}[h]
\begin{center}
\includegraphics[width=0.8\textwidth]{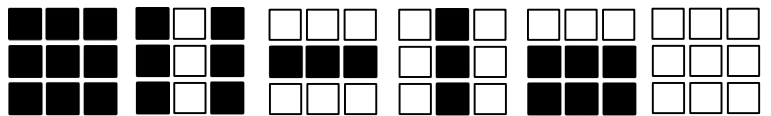}
\end{center}
\caption{Examples of $3\times 3$ images in the Bars \& Stripes language. This figure and the next from~\citep{melchior2016center}}\label{fig:BS}
\end{figure}
\subsubsection{Shifting Bits}
In the \textit{Shifting Bits} (SB) language, inspired from~\citep{melchior2016center}, each row in the pictures is a horizontal translation of the previous row. We allow for shifts of arbitrary length. Formally, we have $SB = \cup_{m,n} SB_{m \times n}$, where $SB_{m \times n} = \{p \in \Sigma^{m \times n}: \exists s \in [m] \text{ such that } (p_{i+1,j} = p_{i,j-s} \text{ if } j-s \geq 1 \text{ else } p_{i+1,j} = b)\}$. Some examples of shift images are given in figure \ref{fig:SB}.

\begin{figure}[h]
\begin{center}
\includegraphics[width=0.8\textwidth]{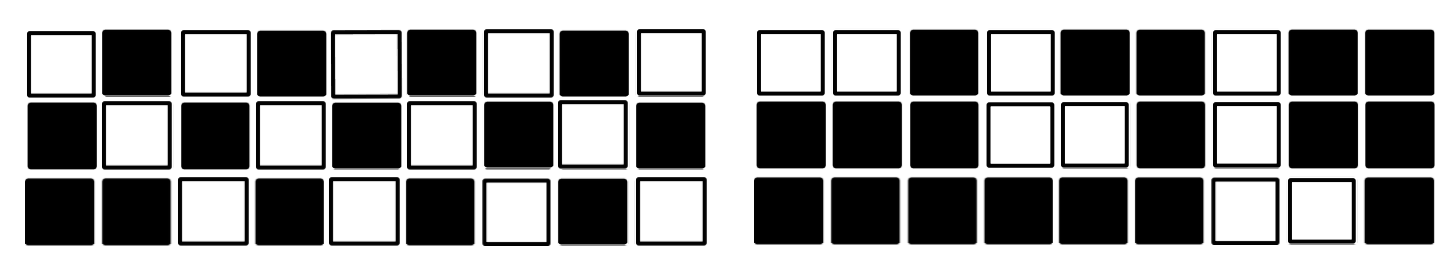}
\end{center}

\caption{Examples of $3 \times 9$ images in the Shifting Bits language, with shifts of size $1$ and $3$.}\label{fig:SB}
\end{figure}
\subsection{Recognizability}
Before considering the problem of learning these languages from positive and negative examples, 
we first show that the Bars \& Stripes language is indeed recognizable by a WPA. 
\begin{proposition}[BS is WPA-recognizable]\label{prop:bs}
The Bars \& Stripes language $BS \subseteq \mathcal{P}$ is WPA-recognizable, in the sense that there exists a WPA whose support is the Bars \& Stripes language. 

More precisely, there exists a WPA $A$ with $6$ states such that
$$f_A(p) = \begin{cases}
1&\text{if } p\in \mathcal{P}\text{ and }p\text{ contains at least one } a \text{ and one }b\\
2&\text{if } p\in \mathcal{P}\text{ and }p \text{ contains only }a\text{'s or only }b\text{'s}\\
0&\text{otherwise}.
\end{cases}$$
\end{proposition}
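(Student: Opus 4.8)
The plan is to read off the automaton from the two defining conditions of the language: writing $C(p)$ for ``every column of $p$ is constant'' (the condition $p_{i,j}=p_{i+1,j}$) and $R(p)$ for ``every row of $p$ is constant'' ($p_{i,j}=p_{i,j+1}$), we have $p\in BS$ iff $C(p)$ or $R(p)$ holds, and the elementary observation is that both hold simultaneously exactly when $p$ is monochromatic. I would therefore aim to build a WPA, with all rule weights equal to $1$, whose accepting runs are in bijection with the conditions satisfied by $p$. Then $f_A(p)=\sum_{c\in R(p)}\delta(c)=|R(p)|$ equals the number of satisfied conditions, which is $0$ when neither holds (so $p\notin BS$, giving support exactly $BS$), $1$ when exactly one holds (so $p\in BS$ and $p$ contains both letters), and $2$ when both hold ($p$ is all $a$ or all $b$). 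This single count simultaneously yields the support claim and the three displayed values.

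Concretely, I would use six states split into a \emph{vertical mode} $\{v_a,v_b,V\}$ and a \emph{horizontal mode} $\{h_a,h_b,H\}$. The vertical mode verifies $C(p)$ by carrying the column colour on the north/south poles and leaving the east/west poles free: for each $\sigma\in\Sigma$ include the rule $(\sigma,V,v_\sigma,V,v_\sigma)$, in the tuple notation $(\sigma,q_w,q_n,q_e,q_s)$ of Definition~\ref{def:wpa}. Since the compatibility condition gives $south(c_{i,j})=north(c_{i+1,j})$, this forces $v_{p_{i,j}}=v_{p_{i+1,j}}$ down each column, while the shared $V$ on the horizontal poles lets neighbouring columns differ. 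Symmetrically, the horizontal mode uses the rules $(\sigma,h_\sigma,H,h_\sigma,H)$ to force each row to be constant. I would take the poles of acceptance to be $F_w=F_e=\{V,h_a,h_b\}$ and $F_n=F_s=\{v_a,v_b,H\}$, so that each propagated-colour state is accepted on the borders it reaches and each free state on the borders it traverses.

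To verify correctness I would first argue that no run can mix the two modes. A vertical-mode cell shows $V$ on both its east and west poles, whereas a horizontal-mode cell shows some $h_\sigma$; since $V\neq h_\sigma$, the condition $east(c_{i,j})=west(c_{i,j+1})$ forces horizontally adjacent cells to share a mode, and likewise $south(c_{i,j})=north(c_{i+1,j})$ forces vertically adjacent cells to share a mode. As the grid is connected, the whole run is of a single mode. A purely vertical run is then compatible and accepting if and only if $C(p)$ holds, in which case it is uniquely determined, because cell $(i,j)$ must use the unique vertical rule labelled $p_{i,j}$; the horizontal case is symmetric. Hence $|R(p)|=\mathbbm{1}[C(p)]+\mathbbm{1}[R(p)]$, giving the three claimed values.

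The main thing to get right is the edge-case bookkeeping rather than any deep argument: I must confirm that the boundary poles land in the acceptance sets in all four directions and that the vacuous cases behave correctly. For a single-row picture the condition $C(p)$ holds vacuously, so the picture lies in $BS$ and admits the vertical run; if it is moreover monochromatic it admits the horizontal run as well, correctly yielding weight $2$ (and the $1\times 1$ case is handled the same way). Pinning down the uniqueness of the run within each mode, and the total absence of spurious mixed runs, is where I would spend the most care, since everything else reduces to a product of unit weights.
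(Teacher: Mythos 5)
Your construction is correct and realizes the same underlying idea as the paper's proof: a six-state WPA with all rule weights equal to $1$, whose accepting runs split into two disjoint families (one enforcing column-constancy, one enforcing row-constancy), each containing at most one compatible run per picture, so that $f_A(p)$ counts how many of the two defining conditions hold --- giving $0$, $1$, or $2$ exactly as claimed. The concrete encoding differs: the paper uses shared border states $q_0$ (accepted on the west and north poles) and $q_f$ (accepted on the east and south) together with four coloured mode states, and its rules allow the propagated colour to change across the free direction (e.g.\ the north pole of an $a$-cell in horizontal mode may be $q_{b\rightarrow}$), whereas you give each mode its own dedicated free state ($V$, $H$) and accept the coloured states directly on the borders they reach. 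The negative direction is also argued differently: the paper exhibits a local $2\times 2$ obstruction in any $p\notin BS$ that kills every run, while you argue globally that no run can mix the two modes (by connectivity of the grid) and that a single-mode run exists iff the corresponding condition holds; your version makes the uniqueness claim and the absence of spurious mixed runs somewhat more transparent, at the cost of needing the connectivity observation. One cosmetic point: you overload the symbol $R(p)$ for both the set of accepted runs and the row-constancy predicate; rename one of the two.
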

\begin{proof}
Consider the WPA $A=(Q,R, F_w,F_e,F_s,F_n,\delta)$, where  $Q = \{q_0,q_{a\rightarrow},q_{a\downarrow},q_{b\rightarrow},q_{b\downarrow},q_f\}$, $F_w=F_n =q_0$, and $F_e = F_s = q_f$. Intuitively, $q_0$ and $q_f$ are the initial and final states, the horizontal states $q_{a\rightarrow}, q_{b\rightarrow}$ encode whether we are in a picture with horizontal bars, and similarly for the vertical states $q_{a\downarrow},q_{b\downarrow}$. For succinctness, we represent the rule set $R$ via the following diagrams:

\begin{tikzpicture}[scale=.9, transform shape]
\tikzstyle{a} = [circle, fill=gray]
\node (a1) at (-1.5, 1.5) {a};
\coordinate (b1) at +(-1.5,3);
\coordinate (c1) at +(0, 1.5);
\coordinate (d1) at (-3,1.5);
\coordinate (e1) at (-1.5,0);
\draw[thick,->] (b1) -- (a1) node[midway, right] {$q_0, \whor,\bhor  $};
\draw[thick,->] (d1) -- (a1) node[midway,above] {$q_0,\whor $};
\draw[thick,->] (a1) -- (c1) node[midway,above] {$q_f,\whor$};
\draw[thick,->] (a1) -- (e1) node[midway,right] {$q_f,\whor$};

\node (a2) at (3, 1.5) {a};
\coordinate (b2) at +(3,3);
\coordinate (c2) at +(4.5, 1.5);
\coordinate (d2) at (1.5,1.5);
\coordinate (e2) at (3,0);
\draw[thick,->] (b2) -- (a2) node[midway, right] {$q_0, \wver $};
\draw[thick,->] (d2) -- (a2) node[midway,above] {$q_0,\wver,\bver $};
\draw[thick,->] (a2) -- (c2) node[midway,above] {$q_f,\wver$};
\draw[thick,->] (a2) -- (e2) node[midway,right] {$q_f,\wver$};

\node (a3) at (7.5, 1.5) {b};
\coordinate (b3) at +(7.5,3);
\coordinate (c3) at +(9, 1.5);
\coordinate (d3) at (6,1.5);
\coordinate (e3) at (7.5,0);
\draw[thick,->] (b3) -- (a3) node[midway, right] {$q_0, \whor, \bhor $};
\draw[thick,->] (d3) -- (a3) node[midway,above] {$q_0,\bhor $};
\draw[thick,->] (a3) -- (c3) node[midway,above] {$q_f,\bhor$};
\draw[thick,->] (a3) -- (e3) node[midway,right] {$q_f,\bhor$};

\node (a4) at (12, 1.5) {b};
\coordinate (b4) at +(12,3);
\coordinate (c4) at +(13.5, 1.5);
\coordinate (d4) at (10.5,1.5);
\coordinate (e4) at (12,0);
\draw[thick,->] (b4) -- (a4) node[midway, right] {$q_0, \bver $};
\draw[thick,->] (d4) -- (a4) node[midway,above] {$q_0,\bver, \wver $};
\draw[thick,->] (a4) -- (c4) node[midway,above] {$q_f,\bver$};
\draw[thick,->] (a4) -- (e4) node[midway,right] {$q_f,\bver$};

\end{tikzpicture}

The diagrams indicate that when the label in the center is observed, any combination of the above states are valid for the poles of the rule. The first picture, for example, says that when the white label is observed, a valid run has poles $west(a) \in \{q_0,\whor\},\  north(a) \in \{q_0,\whor, \bhor\},\  east(a) \in \{q_f, \whor\},\ \text{and } south(a) \in \{q_f,\whor\}$. Finally, the transition function is simply $\delta(r) = 1 \ \forall r \in R$. Thus, $\delta(c) = \Pi_{i,j} \delta(c_{i,j}) = 1$ for an accepting run and $f_A(p)$ simply counts the number of accepting runs on an image $p$.

First we show that if $p$ is an image in BS, then there exists an accepting run for $p$. Assume $p \in \{a,b\}^{m \times n}$ is horizontally striped, and let $\sigma_i$ be the common label of the $i^\text{th}$ row. Consider the run $c$ defined by
\begin{align*}
north(c_{1,j}) &= west(c_{i,1}) = q_0 \ \forall i \in [m] ,j \in [n] \\
east(c_{i,n}) &= south(c_{m,j}) = q_f \ \forall i \in [m],j \in [n] \\
east(c_{i,j}) &= south(c_{i,j}) = q_{\sigma_i \rightarrow} \ \forall (i,j)  \in ([m] \times [n]) \setminus \{(m,n)\}
\end{align*} 
To ensure the compatibility properties of the run $c$ we also set
\begin{align*}
west(c_{i,j}) &= east(c_{i,j-1}) = q_{\sigma_i \rightarrow}\ \forall \ 2 \leq j \leq n, i \in [m] \\
north(c_{i,j}) &= south(c_{i-1,j}) = q_{\sigma_{i-1} \rightarrow} \ \forall\  2 \leq i \leq m, j \in [n]
\end{align*} 

Now, if $p$ is not fully white or fully black, then $c$ is the only possible accepting run on this image, since every non-border rule $c_{i,j}$ can only output $east(c_{i,j}) = south(c_{i,j})= q_{p_{i,j}\rightarrow}$.  Thus $f_A(p)=1$. The proof is analogous if $p$ is vertically striped. Lastly, in the case where the image is fully white (resp. fully black), we have $f_A(p) = 2$, since an accepting run can either assign $\wver$ or $\whor$ (resp. $\bver$ or $\bhor$) to every pole of the non-border labels.

Now we show that if the automaton $A$ accepts $p$, then $p \in BS$. To show the contrapositive, assume $p \notin BS$. Then, somewhere in $p$ there must exist a $2 \times 2$ square of labels of the form 
$\star\ a \atop a\ b$
where $\star \in \{a,b\}$~(up to rotation and/or bit flip of the symbols). 
Since (i) there are no rules allowing both the north and west poles of $b$ to simultaneously have poles $\wver$ or $\whor$ and (ii) the 
south and east poles of the $a$'s are necessarily in one of these states, there does not exist an accepting run for $p$ and $f_A(p)=0$.
This concludes the proof. 
\end{proof}

It follows from Propositions~\ref{prop:wpa} and~\ref{prop:bs} that there exists a GWM recognizing the BS language. In Section~\ref{learning} we  show how to learn a GWM computing this target function via empirical risk minimization. However, we observe that the particular GWM representation associated to the WPA given in the previous proposition is not unique, by a reasoning similar to the proof of~\citep[Proposition 10]{rabusseau2016thesis}: loosely speaking, the value corresponding to a tensor network associated with a given picture is invariant under  a change of basis of every tensor. More specifically, for any invertible matrix $\P$, we can consider the transformation of the tensors $\ten{T}^\sigma$ described in Figure~\ref{fig:changeofbasis}, with analogous transformations for the border vectors $\ten{\alpha}^{\tau}$ for $ \tau \in \{n,e,s,w\}$. 
One can verify that applying this transformation to all the tensors in a GWM gives a new model which computes the same value on all pictures since all $\mat{P}$ matrices will contract with their inverses.
\begin{figure}[h]
	\centering
	\begin{tikzpicture}
	\input{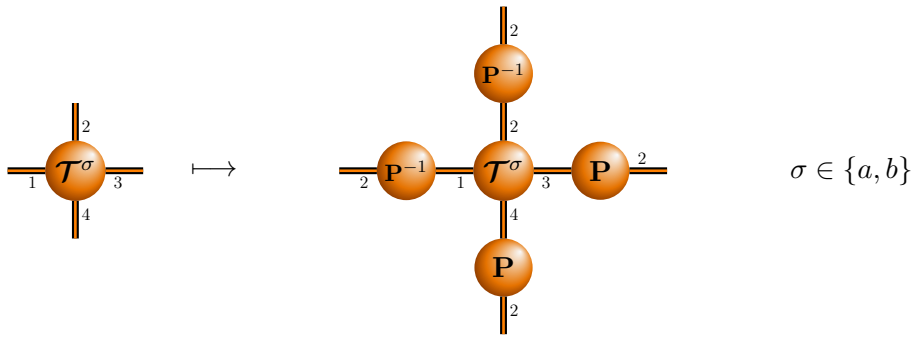}
	\scalebox{1}{  
		
		\node[tensor] (v1)  {$\ten{T}^{\sigma}$};
		\node[draw=none, above = 0.5cm of v1](T-2) {};
		\node[draw=none, below = 0.5cm of v1](T-4) {};
		\node[draw=none, left = 0.5cm of v1](T-1) {};
		\node[draw=none, right = 0.5cm of v1](T-3) {};
		\node[draw=none, right = 1cm of v1](mapsto) {$\longmapsto$};
		\node[tensor, right = 3cm of mapsto] (v2)  {$\ten{T}^{\sigma}$};
		\node[tensor, below = 0.5cm of v2] (v21)  {$\mat{P}$};
		\node[draw=none, below = 0.5cm of v21](P-1) {};
		\node[tensor, right = 0.5cm of v2] (v22)  {$\mat{P}$};
		\node[draw=none, right = 0.5cm of v22](Q-1) {};
		\node[tensor, left = 0.5cm of v2,scale=0.82] (v23)  {$\mat{P}^{-1}$};
		\node[draw=none, left = 0.5cm of v23](Qinv-1) {};
		\node[tensor, above = 0.5cm of v2,scale=0.82] (v24)  {$\mat{P}^{-1}$};
		\node[draw=none, above = 0.5cm of v24](Pinv-1) {};
		
		\node[draw=none, right=2cm of v22] (sigma) {$\sigma \in \{a,b\}$};

		\simpleedgeports{v1}{1}{below left}{T-1}{}{};
		\simpleedgeports{v1}{2}{above right}{T-2}{}{};
		\simpleedgeports{v1}{3}{below right}{T-3}{}{};
		\simpleedgeports{v1}{4}{below right}{T-4}{}{};
		\simpleedgeports{v21}{2}{below right}{P-1}{}{};
		\simpleedgeports{v24}{2}{above right}{Pinv-1}{}{};
		\simpleedgeports{v22}{2}{above right}{Q-1}{}{};
		\simpleedgeports{v23}{2}{below left}{Qinv-1}{}{};
		\simpleedgeports{v2}{4}{below right}{v21}{}{};
		\simpleedgeports{v2}{3}{below right}{v22}{}{};
		\simpleedgeports{v2}{1}{below left}{v23}{}{};
		\simpleedgeports{v2}{2}{above right}{v24}{}{};
		
%
%
%
%
%
%
	}
	\end{tikzpicture}	
	\caption{Transforming all tensors $\ten{T}^\sigma$ through this map, with an analogous map for the border tensors $\ten{\alpha}^{\tau}$, gives a GWM which computes the same value for every picture.}\label{fig:changeofbasis}
\end{figure}


We note that for fixed $m,n$ the Shifting Bits language $SB_{m \times n}$ is also recognizable -- intuitively speaking we can use $n$ states to count that the shift property is being preserved in each row with a construction similar to the previous proposition. By ~\citep[Proposition 5]{rabusseau2016thesis}, finite unions $\cup_{i=1}^k S_{m_i \times n_i}$ are also recognizable. However, we will not be leveraging a recognizability result for this language, since rather than trying to recover an automaton that recognizes the language, we will use this language as an example to show how GWMs can be used to model conditional probabilities via methods described in Section~\ref{learning}.

\subsection{The Learning Problem}\label{learning}

We consider the problem of approximating a target function from a finite number $N$ of input/output examples $S = \{\large(p_i,f(p_i)\large)\}_{i =1}^N \subset \mathcal{P}\times\mathbb{R}$, where $f$ is the target function of interest. To find a candidate GWM $\hat{M}$ which approximates the desired function, we optimize an error function over the training samples with gradient descent. We now outline the specifics for the regression and classification tasks, respectively. 
\\
For the Bars \& Stripes language we seek $f = f_M$ where $f_M$ is the function computed by the GWM recognizing BS from Proposition~\ref{prop:bs}, i.e. the function which outputs 0 on pictures not in BS, 1 on pictures which have only horizontal or vertical stripes, and 2 on pictures which are fully white or fully black. In this setup the candidate GWM $\hat{M}$ is such that $f_{\hat{M}} \simeq f_M$. We minimize the mean squared error (MSE) over the training data.
\begin{equation*}\label{eq:MSE}
J_{\text{MSE}}(\hat{M},S) = \frac{1}{|S|} \sum_{(p,f_M(p))\in S} (f_M(p)-f_{\hat{M}}(p))^2
\end{equation*}
\\
For the Shifting Bits language, we consider the binary classification task which is to predict $1$ on pictures in $SB$ and $0$ otherwise.   As with other commonly used classification models, such as logistic regression and neural networks~\citep{bishop:2006:PRML}, we use a sigmoid function $\sigma:x\mapsto 1 / (1 + e^{-x})$ at the output of the GWM in order to model the conditional probability that the picture $p$ is in SB: $\mathbb{P}\left(f(p_i)=1\vert p_i\right) \simeq \sigma(f_{\hat{M}}(p_i))$. The prediction made in the classification setting is the class with the highest probability, obtained by rounding the output of the sigmoid. The target function is thus $f = \mathbbm{1}_{[SB]}$, the indicator function for the Shifting Bits set. 
 To learn the conditional probability, we use maximum likelihood estimation, which is equivalent to minimizing the cross entropy~(CE) between the target function and the candidate 
\begin{equation}\label{eq:cross-ent}
J_{\text{CE}}(\hat{M},S) = -\frac{1}{|S|} \sum_{(p,f(p))\in S} f(p)\log\sigma\left(f_{\hat{M}}(p)\right) + \left(1-f(p)\right)\log\left(1-\sigma\left(f_{\hat{M}}(p)\right)\right).
\end{equation}

The optimization of both error functions is done using a stochastic gradient descent approach. More precisely, at each iteration $t$, a random mini-batch $S_t$ is uniformly drawn from
$S$ and each parameter of $\hat{M}=(d,\mathcal{T}^a,\mathcal{T}^b,\alpha^w,\alpha^e,\alpha^s,\alpha^n)$ is updated by taking a small step in the opposite direction of the gradient. We use the Adam optimizer~\citep{ADAM} to update the parameters. It is worth mentioning that while deriving an analytic expression for the gradient $\nabla_{(\mathcal{T}^a,\mathcal{T}^b,\alpha^w,\alpha^e,\alpha^s,\alpha^n)} J(\hat{M},S_t)$ may 
be a tedious task, modern deep learning frameworks such as PyTorch~\citep{PYTORCH} can make the implementation of the minimization algorithm relatively uncomplicated by using automatic differentiation techniques to numerically evaluate the gradient via backpropagation.
We remark that the optimization problem is highly non-convex with respect to the weights of the model due to the numerous multiplicative interactions involved in the computations made by the GWM (recall equation \ref{eq:gwm}). Moreover, by the non-uniqueness remarks of Section~\ref{sec:bars}, we can perform a change of basis in the style of Figure~\ref{fig:changeofbasis} to obtain an equivalent model with the same error. This implies that there exist infinitely many global minima to the objective functions, and that if there exists at least one local minimum, then there also exist infinitely many. Thus, convergence to a global minimum is not guaranteed when using gradient descent methods to solve these optimization problems. 

\section{Experiments}
\label{sec:exp}
We implemented the learning procedures detailed in Section~\ref{learning}, and performed experiments on training samples generated from the Bars \& Stripes and the Shifting Bits languages. Our experiments for both languages consider the effect of varying training set sizes and the ability of the models to generalize to unseen pictures of different image dimensions. Our findings are the following: given a sufficient number of training pictures, the learned GWMs are able to (i) make accurate predictions on unseen pictures of the same size and (ii) generalize and make accurate predictions on higher image sizes. 
First, we remark that the datasets are highly class imbalanced: for pictures of size $m \times n$ there are only $2^m + 2^n -2$ and $n2^n -1$ samples (out of $2^{mn}$) which are positive in the Bars \& Stripes and Shifting Bits languages, respectively. To ensure that the model does not learn the constant function $f=0$, we include a 50\% split of positive and negative pictures in the training and test sets. The training and testing sets are kept disjoint whenever possible~(i.e. whenever there are enough positive samples to allow for no overlap).
In all experiments we use the Adam optimizer. We begin by detailing the setup and the experimental results for the Bars \& Stripes language.

\subsection{Bars \& Stripes}

The training and test sets are drawn from the same distribution, however we ensure that the negative samples of the two sets are disjoint so as to accurately test the generalization capabilities of the model. Note that it is not possible to have disjoint positive samples due to their sparsity (e.g. $4 \times 4$ pictures only have $30$ positive images), thus we expect the models to correctly predict on all possible positive samples since they likely will have seen all of them. The real challenge of the task is to learn a model which correctly classifies all other images as negative.
In addition to the mean squared error,  we also report the classification accuracy of our model obtained via thresholding: the model classifies the image as positive if the predicted output is greater than 0.5 and classifies the image as negative otherwise.
In all experiments, we  initialize the tensor values with values independently drawn from a Gaussian distribution with mean zero and standard deviation $0.4$. All models are trained with tensors of dimension $d=6$ to allow the learning procedure to recover the automaton of Proposition \ref{prop:bs}.

Our first experiment considers the effect of varying training set sizes. Here, all pictures are of size $4 \times 4$, the size of the dataset is varied from $N=500$ to $N=50{,}000$, and the test set is of fixed size $N_{test}=100$. 
We use a learning rate of $0.01$ and mini-batches of size $100$. The training errors, testing errors, and classification accuracies for different values of $N$ are reported in~Table \ref{table:exp1} and a plot of the mean squared error~(MSE) at each iteration for the case $N = 10{,}000$ is shown in Figure~\ref{fig:train1}. We observe that even for small training set sizes, the model is able to reduce the MSE and attain perfect (or near-perfect) classification accuracy after seeing a relatively small number of mini-batches, although the training and testing errors do increase as $N$ decreases.
\begin{figure}[h]
\begin{floatrow}
\ffigbox{%
 \includegraphics[scale=0.6, clip=true]{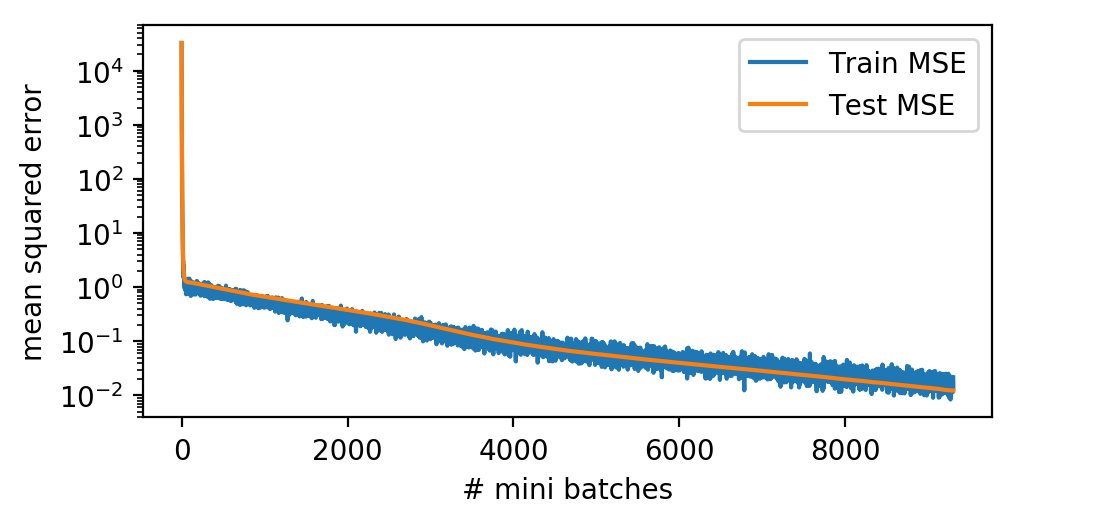}%
}{%
\vspace*{-0.7cm}
 \caption{MSE vs. number of mini-batches, when trained on $10{,} 000$ pictures of size $4 \times 4$. The final MSEs are reported in Table \ref{table:exp1}. Note that the MSE axis is log scaled.}%
 \label{fig:train1}
}
\capbtabbox{%
	\scalebox{0.82}{
 \begin{tabular}{|c | c c c|} 
 \hline
 N & Training error & Testing error & Accuracy \\ [0.5ex] 
 \hline\hline
 $50{,}000$ & $4.5\cdot10^{-4}$ & $7.6\cdot10^{-4}$ & $100\%$ \\ 
 \hline
 $25{,}000$ & $7.5\cdot10^{-3}$ & $9.4\cdot10^{-3}$ & $100\%$ \\
 \hline
 $10{,}000$ & $1.1\cdot10^{-2}$ & $1.2\cdot10^{-2}$ & $100\%$ \\
 \hline
 $5{,}000$ & $6.0\cdot10^{-3}$ & $7.9\cdot10^{-3}$ & $100\%$ \\
 \hline
 $1{,}000$ & $2.96\cdot10^{-3}$ & $1.4\cdot 10^{-2}$ & $100\%$ \\ 
 \hline
 $500$ & $2.4\cdot10^{-3}$ & $2.2\cdot10^{-2}$ & $99\%$ \\[0.5ex] 
 \hline
\end{tabular}}
}{%
  \caption{Training error, testing error, and classification accuracy for varying sizes of the training set.}%
  \label{table:exp1}
}
\end{floatrow}
\end{figure}

Despite the low test and misclassification errors on $4 \times 4$ pictures, the learned models are unable to predict accurate values for pictures of larger size. For instance, when applied on a test set of $5 \times 5$ pictures, the GWM learned on the training set of size $50{,}000$ yielded a mean squared error of $203.2$ and an accuracy of $57\%$ (recall that the data is a $50\%$ split, so this accuracy is just slightly better than random). This is evidence that we have not recovered the GWM of proposition \ref{prop:bs} (or an equivalent model computing the same function), since that automaton recognizes pictures of all sizes and thus would indeed have generalized.

\begin{figure}[h!]\label{fig:xp2}
\begin{center}
\includegraphics[width=0.46\textwidth]{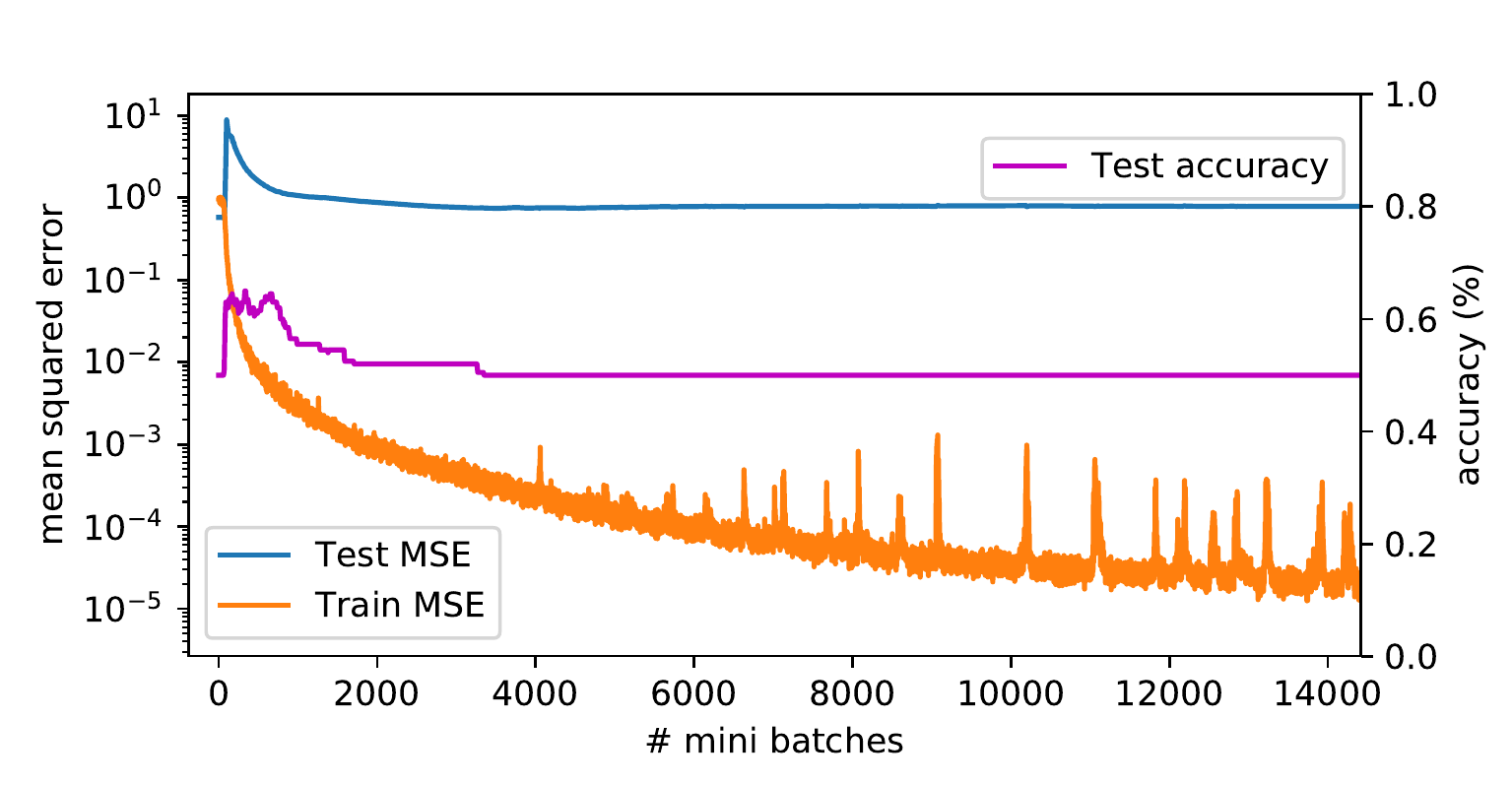}\hspace*{0.5cm}\includegraphics[width=0.46\textwidth]{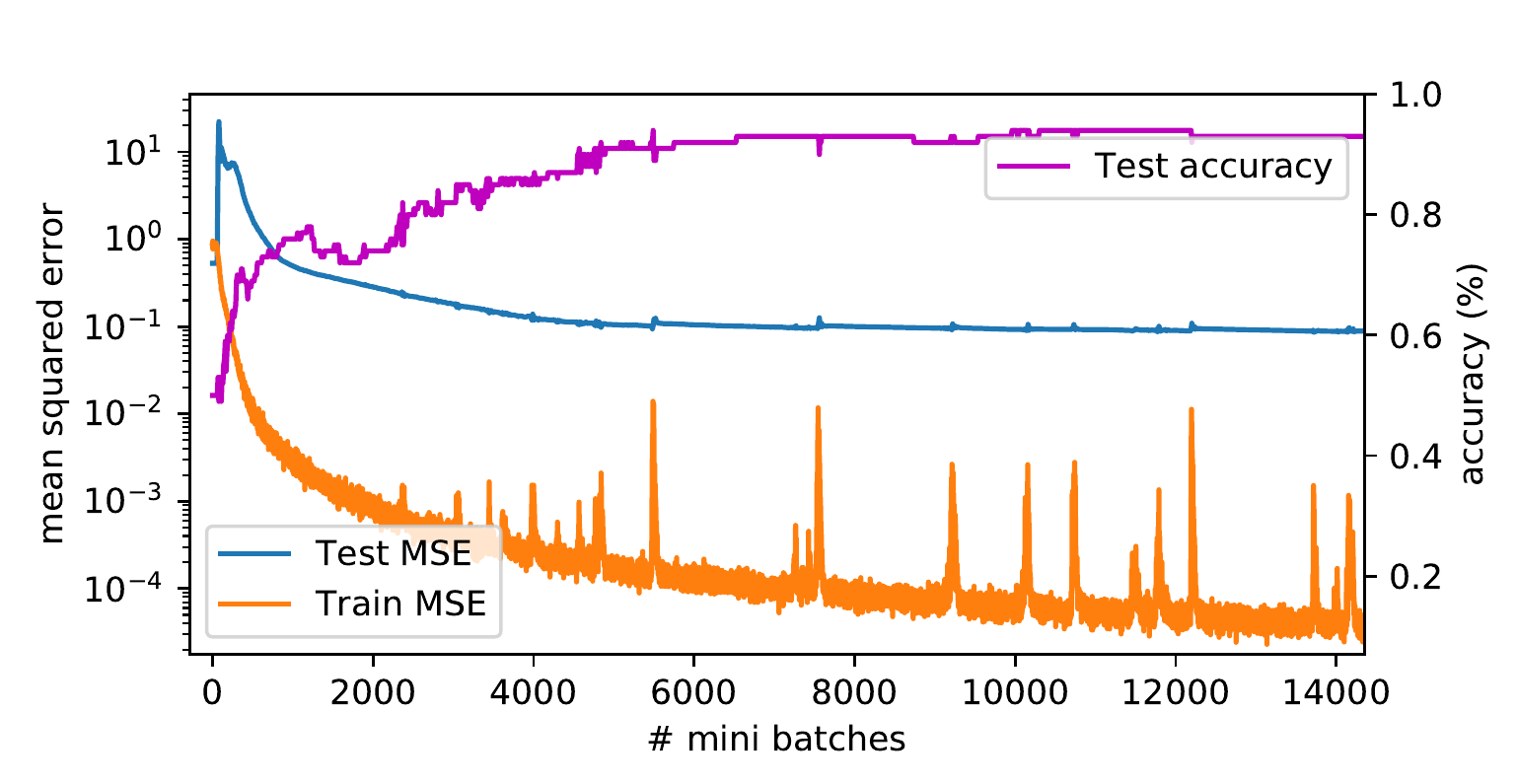}
\end{center}
\caption{Training error, testing error and classification accuracy when training on pictures of sizes $2\times2$ up to $4\times 4$ and testing on $5\times 5$ pictures (left) and when training on pictures of sizes $2\times2$ up to $5\times 5$ and testing on $6\times 6$ pictures (right).}\label{fig:xp2}
\end{figure}

The next experiment remedies this problem by combining multiple picture sizes together in the training set. Specifically, the training set is obtained by sampling $10{,}000$ pictures~($5{,}000$ positives and $5{,}000$ negatives) of different sizes ranging from $2 \times 2, 3\times 3, \cdots, m \times m$. The test set contains 200 pictures of size $(m+1)\times (m+1)$. We run this experiment for $m=4$ and $m=5$ and use a learning rate of 0.001 and mini-batches of size 1,000.
The train and test MSE and the test accuracies are plotted as a function of the number of iterations in Figure~\ref{fig:xp2} where we see that the learned model is able to generalize for the case of $m=5$ but not for $m=4$. The jump in performance observed from the case of $m=4$ to the case of $m=5$ is attributed to the greater wealth of data available in the training set. This experiment suggests that, given enough data, minimizing the mean squared error over the training data using gradient-based methods allows one to recover a GWM which is a good approximation of the target picture series. 
\subsection{Shifting Bits}

Next, we investigate the performance of our methods on classification tasks with samples generated from the Shifting Bits language.
In this language, despite the class imbalance, there are enough positive samples ($n2^n-1$ out of $2^{mn}$) so that the training and testing sets can be fully disjoint for both the positive and the negative samples. We initialize the tensor values with samples drawn from a Gaussian distribution with mean zero and standard deviation $0.2$, and all models have dimension $d=10$. Unlike the previous experiments, here the dimension is treated as a hyperparameter since we are not making use of a recognizability result. 
The learning rate was set to 0.01, with mini-batches of size 128, and we used gradient clipping due to the greater instability observed when minimizing the cross-entropy with the use of the sigmoid function. 

We trained models on multiple datasets of size $N$ consisting of pictures with fixed height $m$ and different widths $n$, and tested the ability of the model to generalize to pictures of unseen sizes. Specifically, we experimented with datasets of sizes ranging from $N=1000$ to $N=20{,}000$, with one model trained on pictures of sizes $2 \times 5, 2 \times 6, \dots, 2 \times 15$ and the other trained on pictures of sizes $3 \times 5, 3 \times 6, \dots, 3 \times 15$. For each model, we tested on $N_{test}=200$ pictures of widths $10, 20, 50,$ and $100$. The width of each picture in the training set was uniformly drawn from $5,\dots,15$. The testing accuracy for different pictures as a function of the training height and dataset size is given in Table \ref{table:shift-xp1} and a plot of the different testing accuracies at each epoch of training for the case of $m=2, N=5{,}000$ is shown in Figure \ref{fig:classif1}. Note that an epoch corresponds to a full pass through the training set (i.e. $N/128$ number of mini-batches). The figure shows that the model quickly converges to high accuracies on all $2 \times n$ sets, and correspondingly minimizes the cross-entropy errors (not shown). 
\begin{figure}[h]
\begin{floatrow}
\ffigbox{%
\hspace*{-0.5cm}
 \includegraphics[width=0.55\textwidth, clip=true]{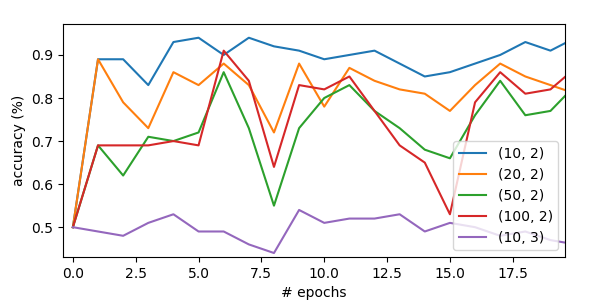}%
}{%
\vspace*{-0.65cm}
 \caption{Accuracy vs. epochs when tested on different picture sizes and trained on $5{,} 000$ pictures of height $2$ and widths uniformly drawn between $5,\dots,15$. The final accuracies are reported in Table \ref{table:shift-xp1}.}%
 \label{fig:classif1}
}

\capbtabbox{%
	\scalebox{0.79}{
	\hspace*{-0.5cm}
 \begin{tabular}{|c | c c c c c |}
 \hline
$m=2, N=$ & $2 \times 10$ & $2 \times  20$ & $2 \times  50$ & $2 \times  100$ & $3 \times  10$ \\ 
 \hline\hline 
1000& $67.0\%$ & $67.0\%$ & $61.0\%$ & $59.0\%$ & $61.0\%$ \\ 
5000& $94.0\%$ & $90.0\%$ & $73.0\%$ & $69.0\%$ & $58.0\%$ \\ 
20000& $100.0\%$ & $96.0\%$ & $96.0\%$ & $92.0\%$ & $53.0\%$\\ 
 \hline 
 \hline
$m=3, N=$ & $3 \times 10$ & $3 \times  20$ & $3 \times  50$ & $3 \times  100$ & $2 \times  10$ \\ 
 \hline\hline 
1000& $49.0\%$ & $48.0\%$ & $60.0\%$ & $50.0\%$ & $58.0\%$ \\ 
5000& $45.0\%$ & $51.0\%$ & $50.0\%$ & $51.0\%$ & $48.0\%$ \\ 
20000& $100.0\%$ & $100.0\%$ & $97.0\%$ & $100.0\%$ & $50.0\%$\\ 
 \hline 
 \end{tabular}
}
}{%
  \caption{Testing accuracies on different picture sizes for varying dataset sizes, for models trained on pictures of height $2$ (above) and height $3$ (below).}%
  \label{table:shift-xp1}
}

\end{floatrow}
\end{figure}

Given enough data, the learned models for $m=2$ and $m=3$ are able to generalize to wider pictures of the same height, attaining accuracies greater than $90\%$, although for the case $m=3$ more training data is needed to obtain accurate predictions. In addition to the training and test sets being disjoint, we note that the models are tested on pictures that can have shifts of size greater than 15~(while all training pictures had, at most, a shift of 15). However, we observe that the models trained on size $m=2$ (resp. $m=3$) are unable to correctly classify pictures of height $3$ (resp. $2$), and can only attain accuracies of around $50\%$. 

In an effort to train a model which would generalize to multiple heights, we attempted to combine pictures of heights 2 and 3 (still with multiple widths) together in the training set, and to test on pictures of height $ 2$, $3$, and $4$. However, unlike in the Bars \& Stripes experiments, we found that it was possible to minimize the training error but to not correctly classify all the test sets. In our experiments the models would only settle for classifying a single picture size to near-perfect accuracies and attain $50\%$ on other sizes. Some moderate but non-exhaustive exploration was done on the hyperparameters (i.e. learning rate and dimension of the model) with no significant impact on performance. We leave further experimentation for future work.

\section{Conclusion}
\label{sec:conc}

We investigated the use of gradient-based algorithms for learning  GWMs from data and provided empirical evidence suggesting that it is possible to recover a GWM which approximates a function recognizing formal picture languages. We tackled regression and classification tasks and showed that in both cases the training and testing errors can be minimized leading to models which generalize to unseen pictures of different sizes. Our experiments on the toy Bars \& Stripes and Shifting Bits picture languages indicate that this is a promising direction, although this is just a first step. In the future, we intend to apply these methods to more challenging tasks, including more traditional machine learning problems which have typically not been viewed from the formal languages perspective. Furthermore, these methods can equally be applied to other families of graphs which could lead to interesting applications (e.g. in bioinformatics or in natural language processing). Moreover, a theoretical analysis of this approach remains to be done and is of particular interest. Extending the spectral learning algorithms to WPA and GWMs is also a direction that we will continue to investigate.

\newpage
\bibliography{gwm-images}

\end{document}